\long\def\symbolfootnote[#1]#2{\begingroup
\def\thefootnote{\fnsymbol{footnote}}\footnote[#1]{#2}\endgroup}
\newtheorem{theorem}{Theorem}[section]
\newtheorem{lemma}[theorem]{Lemma}
\theoremstyle{nonumberplain}
\begin{document}
\bibliographystyle{IEEEtran}

\title{Rate Region Frontiers for $n-$user Interference Channel with Interference as Noise}

\author{
\authorblockN{Mohamad Charafeddine, Aydin Sezgin, Arogyaswami
Paulraj}
\authorblockA{Information Systems Laboratory, Department of Electrical Engineering, Stanford University\\
Packard 225, 350 Serra Mall, Stanford, CA 94305, Email:
\{mohamad,sezgin,apaulraj\}@stanford.edu }}


\maketitle

\begin{abstract}
This paper presents the achievable rate region frontiers for the
$n-$user interference channel when there is no cooperation at the
transmit nor at the receive side. The receiver is assumed to treat
the interference as additive thermal noise and does not employ
multiuser detection. In this case, the rate region frontier for the
$n-$user interference channel is found to be the union of $n$
hyper-surface frontiers of dimension $n-1$, where each is
characterized by having one of the transmitters transmitting at full
power. The paper also finds the conditions determining the convexity
or concavity of the frontiers for the case of two-user interference
channel, and discusses when a time sharing approach should be
employed with specific results pertaining to the two-user symmetric
channel. \symbolfootnote[0]{This work was supported by the NSF
DMS-0354674-001 and ONR N00014-02-0088 grants.}
\end{abstract}


\IEEEpeerreviewmaketitle


\section{Introduction}
The capacity region of a two-user communication channel has been an
open problem for about $30$ years \cite{Sato:2usersCh,
Sato:degardedGaussian2users}. Information-theoretic bounds through
achievable rate regions have been proposed, most famously with the
Han-Kobayashi region \cite{Kobayashi:interf}. The capacity of the
Gaussian interference channel under strong interference has been
found in \cite{Sato:capacityWithStrongInterf}. Recent results on the
two-user interference channel to within one bit of capacity have been shown in
\cite{EtkinTse:GaussianInterfChannelToOneBit}. The aforementioned
referenced literature focused on the two-user interference channel
from an information-theoretic point of view. This work presents the
frontiers for the achievable rate regions for $n-$user interference
channel when the interference is treated as additive noise and no
multiuser detection is employed. Examples where we encounter the
need to define such rate regions are found in multicell
communications, in addition to mesh and sensor networks where the
preference is in employing low-complexity transceivers.

The system setup is presented in section \ref{sec_SysSetup}. Section
\ref{sec_FrontierTwoUser} discusses the achievable rate frontiers
for the two-user interference channel. The $n-$user generalization
is treated in section \ref{sec_FrontierGeneral}. Section
\ref{sec_2usersRegion} focuses on characterizing the two-user
interference channel in terms of convexity or concavity and when a
time sharing approach should be used, with specific results to the
symmetric channel.
\begin{figure} [ttt]
\centering
\includegraphics[width=166pt,height=120pt]{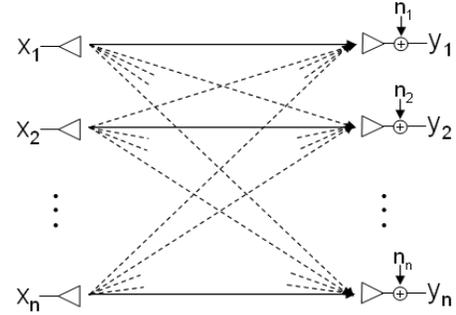}
\caption{$n-$user interference channel}
 \label{fig_n-user}
 \vspace{-.33cm}
\end{figure}

\section{System Setup}
\label{sec_SysSetup} The $n-$user interference channel is presented
in Fig. \ref{fig_n-user} with $n$ transmitters and $n$ receivers.
The $i^{th}$ transmitter transmits its signal $x_i$ to the intended
$i^{th}$ receiver with a power $P_i$. The receivers have independent
additive complex white Gaussian noise with a power noise variance of
$\sigma_n^2$. Each transmitter is assumed to have a maximum power
constraint of $P_{\max}$. No cooperation is assumed between the
nodes at the transmit side nor at the receive side. The transmitters
have a single antenna each, and they communicate over frequency flat
channels. $g_{i,j}$ denotes the channel power gain received at the
$i^{th}$ receiver from the $j^{th}$ transmitter, and there are no
constraints over the values or distributions of $g_{i,j}$. Therefore
$g_{i,i}$ is the channel gain of the $i^{th}$ desired signal, where
as $g_{i,j},~ j \neq i$ represent the interfering channel gains.
{\bf P} is the transmit power vector of length $n$, where the
$i^{th}$ element $P_i$ denotes the transmit power of the $i^{th}$
transmitter. Treating the interference as additive noise throughout
this paper and with no multiuser detection employed, $C_i$ denotes
the maximum reliable rate of communication between the $i^{th}$
transmitter and the $i^{th}$ receiver. Therefore the achievable rate
for the $i^{th}$ transmit-receive pair is written as:
\begin{align}
C_i({\bf P}) =
\log_2\left(1+\frac{g_{i,i}P_i}{\sigma_n^2+\sum_{j\neq
i}g_{i,j}P_j}\right). \label{Ci}
\end{align}
The objective of this work is to find the achievable rate region for
the $n$ transmit-receive pairs. Section \ref{sec_FrontierTwoUser}
finds the achievable rate region frontier for the two-user channel,
and section \ref{sec_FrontierGeneral} generalizes the frontier for
the $n-$user case.

\section{Achievable Rate Region Frontier for
Two-user Interference Channel} \label{sec_FrontierTwoUser} This
section studies the two-user interference channel. In this case,
(\ref{Ci}) can be expressed in function of $P_1$ and $P_2$ as
$C_i(P_1,P_2),~i=1,2$. For notational brevity, the channel power
gains are normalized by the noise variance, specifically:
\begin{align}
\begin{array}{ccc}
a=g_{1,1}/\sigma_n^2, & & c=g_{2,2}/\sigma_n^2,\\
b=g_{1,2}/\sigma_n^2, & & d=g_{2,1}/\sigma_n^2.
\end{array}\nonumber
\end{align}
\begin{figure} [ttt]
\centering
\includegraphics[width=166pt,height=57pt]{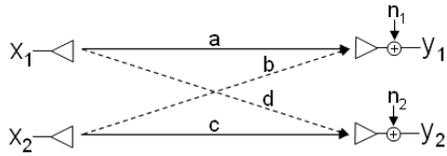}
\caption{Two-user interference channel}
 \label{fig_2-user}
  \vspace{-.33cm}
\end{figure}
The two-user interference channel is depicted in Fig.
\ref{fig_2-user}. $C_1$ and $C_2$ can therefore be written as:
\begin{align}
C_1(P_1,P_2) = \log_2\left(1+\frac{aP_1}{1+bP_2}\right),
\label{C1P1P2orig}
\end{align}
\begin{align}
C_2(P_1,P_2) = \log_2\left(1+\frac{cP_2}{1+dP_1}\right).
\label{C2P1P2orig}
\end{align}
Our objective is to find a frontier of the achievable rate region of
(\ref{C1P1P2orig}) and (\ref{C2P1P2orig}) through the power control
of $P_1$ and $P_2$, where each transmitter is subject to a maximum power
constraint of $P_{\max}$.

Fig. \ref{fig_twoUserRegion} illustrates an example of rate region
for a two-user interference channel, with $C_1$ and $C_2$ as the
x-axis and the y-axis, respectively. From (\ref{C1P1P2orig}), $C_1$ in
monotonically increasing in $P_1$ and monotonically decreasing in
$P_2$, thus the point $C_1(P_{\max},0)$, alternatively annotated as
point C on the x-axis in the figure, represents the maximum value
$C_1$ can obtain. Similarly for the y-axis, the maximum value that
$C_2$ achieves is $C_2(0,P_{\max})$, annotated as the point A on the
y-axis. The point B has the following coordinates of
$C_1(P_{\max},P_{\max})$ and $C_2(P_{\max},P_{\max})$.

\subsection{Rate Region Frontier Formulation}
\label{FrontierFormulation} The rate region frontier can be found by
setting $C_1$ to a value $R$, then $R$ is swept over the full range
of $C_1$, i.e. from $0$ to $C_1(P_{\max},0)$, while finding the
maximum $C_2$ value that can be achieved for each $R$. Hence, for a
constant rate $C_1=R$,
\begin{equation}
C_1(P_1,P_2)=R=\log_2\left(1+\frac{aP_1}{1+bP_2}\right).
\label{C1P1P2}
\end{equation}
Therefore the relation between $P_1$ and $P_2$ is obtained as
follows:
\begin{align}
P_1=\frac{1}{a}(1+bP_2)(2^R-1).
\label{P1P2relation}
\end{align}
From (\ref{P1P2relation}), $C_2(P_1,P_2)$ can now be written in
function of one parameter as $C_2(P_2)$, specifically:
\begin{align}
C_2(P_2) =
\log_2\left(1+\frac{cP_2}{1+\frac{d}{a}(1+bP_2)(2^R-1)}\right).
\label{C2P2}
\end{align}
It is important to analyze the behavior of $C_2(P_2)$ in terms of
$P_2$. This is presented in the following lemma: 
\begin{lemma}
Setting $C_1$ at a constant rate, $C_1(P_1,P_2)=R$, $C_2(P_2)$ is a
monotonically increasing function in $P_2$. \label{lemma_mono}
\end{lemma}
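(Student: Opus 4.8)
The plan is to peel off the monotone outer function. Since the map $x\mapsto\log_2(1+x)$ is strictly increasing on $x\ge 0$, the claim in Lemma~\ref{lemma_mono} follows once we show that the signal-to-interference-plus-noise term
\[
\mathrm{SINR}(P_2):=\frac{cP_2}{1+\frac{d}{a}(1+bP_2)(2^R-1)}
\]
inside the logarithm in (\ref{C2P2}) is increasing in $P_2$ on $[0,P_{\max}]$. I will repeatedly use the sign facts that $R$ ranges over $[0,C_1(P_{\max},0)]$, so $2^R-1\ge 0$, and that $a,b,c,d$ are nonnegative (they are channel power gains normalized by $\sigma_n^2$), with $c>0$ in the nondegenerate case.

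Next I would collect the $P_2$-dependence of the denominator. Writing $\kappa:=\frac{d}{a}(2^R-1)\ge 0$, the denominator equals $1+\kappa+b\kappa P_2$, which is affine in $P_2$ with positive constant term $1+\kappa$ and nonnegative slope $b\kappa$. Hence $\mathrm{SINR}(P_2)=\dfrac{cP_2}{(1+\kappa)+b\kappa P_2}$, a ratio of Möbius type $\dfrac{cP_2}{\mu+\nu P_2}$ with $\mu=1+\kappa>0$, $\nu=b\kappa\ge 0$, $c>0$. Its derivative is $\dfrac{c\mu}{(\mu+\nu P_2)^2}>0$ for all $P_2\ge 0$; equivalently, when $\nu>0$ one may rewrite it as $\tfrac{c}{\nu}\bigl(1-\tfrac{\mu}{\mu+\nu P_2}\bigr)$, which is manifestly increasing, and when $\nu=0$ it is simply linear in $P_2$ with positive slope $c/\mu$. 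Composing with the strictly increasing map $x\mapsto\log_2(1+x)$ then gives that $C_2(P_2)$ is (strictly) increasing in $P_2$, which is even stronger than stated.

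I do not anticipate a genuine obstacle here; the only points requiring care are bookkeeping ones: verifying $2^R-1\ge 0$ so that $\kappa\ge 0$, checking that the denominator $1+\kappa+b\kappa P_2$ never vanishes on the relevant range (immediate from nonnegativity of the gains), and isolating the degenerate subcase $b\kappa=0$ (no interference coupling through this parametrization), in which $C_2(P_2)=\log_2\!\bigl(1+\tfrac{c}{1+\kappa}P_2\bigr)$ is trivially increasing, as well as the trivial case $c=0$ where $C_2\equiv 0$. An alternative route is to differentiate $C_2(P_2)$ directly by the chain rule, $\tfrac{d}{dP_2}C_2=\tfrac{1}{\ln 2}\cdot\tfrac{1}{1+\mathrm{SINR}}\cdot\tfrac{d}{dP_2}\mathrm{SINR}$, and note that only the sign of $\tfrac{d}{dP_2}\mathrm{SINR}$ matters, which reduces to the same one-line computation.
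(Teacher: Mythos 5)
Your proposal is correct and follows essentially the same route as the paper's appendix proof: strip off the monotone $\log_2(1+\cdot)$ and show the inner rational function of $P_2$ has positive derivative, the only cosmetic difference being that you keep the denominator in the affine form $\mu+\nu P_2$ while the paper clears the factor $1/a$ before differentiating. Your handling of the degenerate cases ($c=0$, $b\kappa=0$) matches the paper's remark that $a=0$ or $c=0$ are trivial, so nothing is missing.
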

\begin{proof}
The proof is provided in the Appendix.
\end{proof}
\emph{Remark:} A direct implication of the monotonicity of the
relation in (\ref{C2P2}) is that if $C_2$ is equal to a constant
$C_2^*$ for the rate of $C_1=R$, then: i) there is a unique $P_2^*$
that achieves $C_2^*$, ii) when $P_2^*$ is determined, then
$P_1=P_1^*$ is uniquely defined from (\ref{P1P2relation}), iii) from
$P_1^*$ and $P_2^*$, $C_1$ is uniquely defined as $C_1=R=C_1^*$ from
(\ref{C1P1P2}). Thus $P_1^*$ and $P_2^*$ uniquely define a point in
the rate region with coordinates $C_1^*$ and $C_2^*$.

In other words, any point in the rate region is achieved solely by a
unique power tuple. This leads to what we denote by potential lines
$\Phi$ in the rate region which are formed by holding one of the
power parameters constant to a certain value, and sweeping the other
parameter over its full power range. For instance, the potential
line $\Phi(:,P_{\max})$ is formed by sweeping $P_1$ from $0$ to
$P_{\max}$ and holding $P_2$ at $P_{\max}$. Based on the uniqueness
property just discussed, these potentials lines along the $P_1$
dimension are therefore non-touching, i.e. $\Phi(:,p_2)$ and
$\Phi(:,p_2')$ do not intersect if $p_2 \neq p_2'$.

\begin{figure} [ttt]
\centering
\includegraphics[width=250pt,height=180pt]{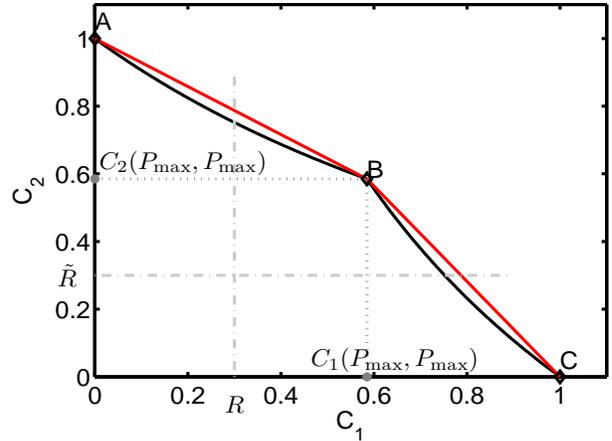}
\caption{ Two-user interference channel achievable rate region. (For
$P_{\max}=1$, $a=1$, $b=1$, $c=1$, and $d=1$)}
\label{fig_twoUserRegion} \vspace*{-0.5cm}
\end{figure}

The rate region frontier then simplifies to finding the maximum
value $C_2(P_2)$ could achieve for any value of $C_1(P_1,P_2)=R$.
Effectively, this is formulated as:
\begin{align}
\begin{array}{ll}
\arg \mathop {\max}\limits_{P_2} & C_2(P_2) \label{argC2}\\
\mbox{subject to} & C_1(P_1,P_2) = R\\
& P_i \leq P_{\max}~~~~~i=1,2.
\end{array}
\end{align}

\subsection{$C_2$ frontier for $0\leq R \leq C_1(P_{\max},P_{\max})$}
\label{subsec_Ra} As (\ref{C1P1P2}) is monotonically increasing in
$P_1$ and monotonically decreasing in $P_2$, $R$ can only exceed the
value $C_1(P_{\max},P_{\max})$ when $P_2$ is less than $P_{\max}$.
Thus $P_2=P_{\max}$ is attainable only when $0\leq R\leq
C_1(P_{\max},P_{\max})$, and $P_2$ needs to be less than $P_{\max}$
otherwise. From the proof provided in Lemma \ref{lemma_mono}, where
(\ref{C2P2}) is proved to be monotonically increasing in $P_2$, and
for the following range of $R$: $0\leq R \leq
C_1(P_{\max},P_{\max})$, the solution to (\ref{argC2}) is:
\begin{align}
\arg \max_{P_2} C_2(P_2)=P_{\max} \label{P2Pmax}
\end{align}
Therefore in this range of $R$ or equivalently of $C_1$, using
(\ref{P1P2relation}) and (\ref{P2Pmax}), $C_2$ is expressed in
function of $C_1$ as follows:
\begin{align}
C_2(C_1)=
\log_2\left(1+\frac{cP_{\max}}{1+\frac{d}{a}(1+bP_{\max})(2^{C_1}-1)}\right).
\nonumber
\end{align}

\subsection{$C_2$ frontier for $C_1(P_{\max},P_{\max}) \leq R \leq C_1(P_{\max},0)$}
\label{subsec_Rb} Using symmetry of the previous result, for a
constant rate $C_2=\tilde{R}$, there is a linear relation between
$P_1$ and $P_2$. And thus $C_1(P_1,P_2)$ can be written in function
of one parameter $P_1$ as follows:
\begin{align}
C_1(P_1) =
\log_2\left(1+\frac{aP_1}{1+\frac{b}{c}(1+dP_1)(2^{\tilde{R}}-1)}\right).
\nonumber
\end{align}
And by symmetry of the result in Lemma \ref{lemma_mono}, $C_1(P_1)$
is monotonically increasing in $P_1$. Thus by symmetry, for the
following range of $\tilde{R}$:
\begin{align}
0\leq \tilde{R} \leq C_2(P_{\max},P_{\max}), \label{RtildeRange}
\end{align}
we have:
\begin{align}
\arg \max_{P_1} C_1(P_1)=P_{\max}.
\nonumber
\end{align}
Therefore for this range of $\tilde{R}$, $P_1=P_{\max}$ is
attainable and maximizes $C_1(P_1)$. Correspondingly, $C_1$ spans
the following range:
\begin{align}
C_1(P_{\max},P_{\max}) \leq C_1 \leq C_1(P_{\max},0).
\label{C1rangeb}
\end{align}
So for the range of $\tilde{R}$ in (\ref{RtildeRange}) and the range
of $C_1$ in (\ref{C1rangeb}), $P_1=P_{\max}$ describes the frontier.
Therefore the values of $C_1$ at the frontier are:
\begin{align}
C_1(P_{\max},P_2)=\log_2\left(1+\frac{aP_{\max}}{1+bP_2}\right).
\nonumber
\end{align}
Hence for $C_1(P_{\max},P_{\max}) \leq R \leq C_1(P_{\max},0)$, the
value of $P_2$ that will achieve the frontier follows as:
\begin{align}
P_2=\frac{1}{b}\left(\frac{aP_{\max}}{2^{R}-1}-1\right).
\label{P2range2}
\end{align}
So effectively the value found in (\ref{P2range2}) is the answer for
(\ref{argC2}) for the range of $C_1(P_{\max},P_{\max}) \leq R \leq
C_1(P_{\max},0)$.

\subsection{Achievable Rate Region Frontier}
This subsection consolidates the two results to fully describe the
rate region frontier. For a value of $c_1$ that sweeps the full
range of $C_1$, we have:
\begin{itemize}
\item for $0\leq c_1 \leq C_1(P_{\max},P_{\max})$
\begin{align}
\arg \max_{P_2} C_2(P_2)=P_{\max} \nonumber
\end{align}
and the frontier, denoted by ${\cal F}_2=\Phi(:,P_{\max})$, is
expressed as:
\begin{align}
C_2(c_1)= \log_2\left(1+\frac{cP_{\max}}{\displaystyle
1+\frac{d}{a}(1+bP_{\max})(2^{c_1}-1)}\right) \label{F1}
\end{align}
\item for $C_1(P_{\max},P_{\max})\leq c_1 \leq C_1(P_{\max},0) $
\begin{align}
\arg \max_{P_2}
C_2(P_2)=\frac{1}{b}\left(\frac{aP_{\max}}{2^{c_1}-1}-1\right)
\nonumber
\end{align}
and the frontier, denoted by ${\cal F}_1=\Phi(P_{\max},:)$, is
expressed as:
\begin{align}
C_2(c_1)= \log_2\left(1+\frac{\displaystyle
\frac{c}{b}\left(aP_{\max}-(2^{c_1}-1)\right)}{\displaystyle
(2^{c_1}-1)(1+dP_{\max})}\right).
\end{align}
\end{itemize}
The notation ${\cal F}_i$ denotes a potential line parameterized by
holding the $i^{th}$ element in the power tuple at maximum power.

Finally, the rate region frontier ${\cal F}$ for a two-user
interference channel is obtained as:
\begin{align}
{\cal F}= \mbox{Convex Hull} \{ {\cal F}_1 \cup {\cal F}_2 \}.
\end{align}
The convex hull operation stems from the time sharing solution of
the extremity points in the frontiers in order to arrive to a convex
rate region. For example, in Fig. \ref{fig_twoUserRegion}, ${\cal
F}$ is described by connecting points A and B, and points B and C.

\begin{figure} [t]
\centering
\includegraphics[width=250pt,height=180pt]{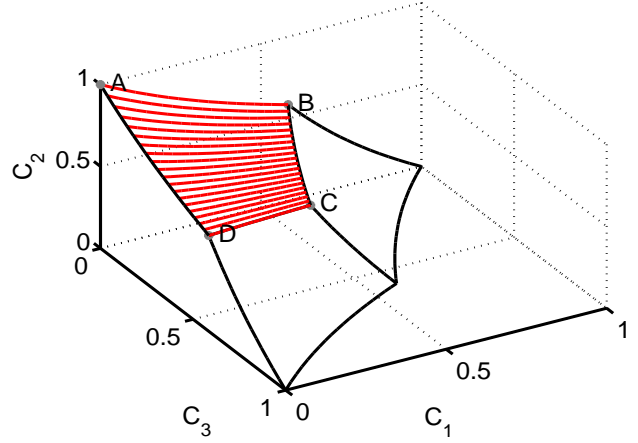}
\caption{A 3-user interference channel rate region}
 \label{fig_3-userRateRegion}
\end{figure}

\section{Achievable Rate Region Frontier for $n-$user Interference
Channel} \label{sec_FrontierGeneral} This section starts by
considering a $3-$user interference channel to show the effect of
adding a new dimension, then generalizes the results for the
$n-$user case.

\subsection{$3-$user example: Effect of increasing $P_3$ from $0$ to $P_{\max}$}
The rate region for the $3-$user case is illustrated in Fig.
\ref{fig_3-userRateRegion}. The following notation of
$\Phi(P_1,P_2,P_3)$ denotes a point in the rate region with
coordinates of $[C_1(P_1,P_2,P_3), C_2(P_1,P_2,P_3),
C_3(P_1,P_2,P_3)]$. Accordingly, $\Phi(:,P_{\max},P_3)$ describes a
line characterized by sweeping the transmit power of the first
transmitter $P_1$ from $0$ to $P_{\max}$, with the second
transmitter transmitting at $P_{\max}$, and the third transmitter
transmitting at a value of $P_3$. Similarly, $C_i(:,P_{\max},:)$
represents a surface in the rate region marked by sweeping the full
range of $P_1$ and $P_3$, and holding $P_2$ at $P_{\max}$.

When $P_3=0$, the same setup and results that were described in
section \ref{sec_FrontierTwoUser} applies. Specifically, for the
rate range of $0\leq C_1 \leq C_1(P_{\max},P_{\max},0)$ and $0\leq
C_2 \leq C_2(0,P_{\max},0)$ and $C_3=0$, the frontier can be
described as $\Phi(:,P_{\max},0)$, which is the line from point A to
point B in Fig. \ref{fig_3-userRateRegion}. As $P_3$ increases, we
want to describe the subsequent effect and how it is traced in the
rate region.

Revisiting the equation in (\ref{Ci}), a constant $P_3$ has the
effect of just an additive noise term in $C_1({\bf P})$ and
$C_2({\bf P})$. Hence, all the previous results in section
\ref{sec_FrontierTwoUser} are applicable for any value of $P_3$ in
describing the frontier for $C_1$ and $C_2$; as the effect of $P_3$
can be lumped in the noise term. Thus for the range of $0\leq C_1
\leq C_1(P_{\max},P_{\max},P_3)$ and $0\leq C_2 \leq
C_2(0,P_{\max},P_3)$, where $P_3$ is constant, the frontier line on
$C_1$ and $C_2$ is $\Phi(:,P_{\max},P_3)$, i.e. characterized by
having $P_2=P_{\max}$. Consequently, the potential lines (or
surfaces) concept in the $3-$user case carries through.

Next, the frontier on $C_3$ needs to be described. For each value of
$P_3$, $\Phi(:,P_{\max},P_3)$ traces one of the highlighted curves
in Fig. \ref{fig_3-userRateRegion}. For these collection of lines to
form a frontier we want to prove that at each increasing value of
$P_3$ these potential lines monotonically increase in the $C_3$ dimension.
This is obvious from the $C_3$ and $P_3$ relation in (\ref{Ci}). The
maximum value of $C_3$ that can be achieved in this case is when
$P_3=P_{\max}$, i.e. $C_3(:,P_{\max},P_{\max})$. Therefore the
highlighted frontier surface in Fig. \ref{fig_3-userRateRegion} is
the potential surface $\Phi(:,P_{\max},:)$. The boundary contours of
this surface are the potential lines: $A\leftrightarrow B$,
$B\leftrightarrow C$, $C\leftrightarrow D$, and $D\leftrightarrow
A$, defined as $\Phi(:,P_{\max},0)$, $ \Phi(P_{\max},P_{\max},:)$,
$\Phi(:,P_{\max},P_{\max})$, and $\Phi(0,P_{\max},:)$, respectively.

By symmetry of interchanging $P_1$, $P_2$ and $P_3$, we find that
the $3-$user rate region frontier is determined through the union of
three surfaces: $\Phi(P_{\max},:,:) \cup \Phi(:,P_{\max},:) \cup
\Phi(:,:,P_{\max})$. And $\cal F$ is expressed as:
\begin{align}
{\cal F} = \mbox{Convex Hull} \{{\cal F}_1 \cup {\cal F}_2 \cup
{\cal F}_3\},
\end{align}
where ${\cal F}_i$ is a potential surface $\Phi(\cdot)$ with
$P_{\max}$ in the $i^{th}$ power position. (Note that the
intersection of potential surfaces is a potential line, as two of
the dimensional inputs become equal, i.e. $\Phi(P_{\max},P_{\max},:) \in
{\cal F}_1$ and $ \Phi(P_{\max},P_{\max},:) \in {\cal F}_2$.)

\subsection{$n-$user generalization}
The case for $n-$user generalization can be done by induction. For
the $n^{th}$ added dimension to the existing $n-1$ dimensions
problem, the additional power effect of $P_n$ can be lumped in the
additive noise term of the existing expressions, and thus the
results for $C_1,\ldots,C_{n-1}$ hold and carry through. The
frontier on $C_n$ is monotonically increasing in $P_n$, and can be
maximized with $P_n=P_{\max}$ for the appropriate range in
$C_1,\ldots,C_{n-1}$. Invoking symmetry we can generalize over all
the rate ranges, therefore arriving to the following theorem.
\begin{theorem}
 The achievable rate region frontier of the $n-$user channel is:
\begin{align}
{\cal F} = \mbox{Convex Hull} \{\cup_{i=1}^n {\cal F}_i\},
\end{align}
where ${\cal F}_i$ is a hyper-surface of $n-1$ dimensions
characterized by holding the $i^{th}$ transmitter at full power
$P_{\max}$.
\end{theorem}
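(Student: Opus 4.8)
\emph{Proof idea.} The plan is to argue by induction on $n$, using the two-user analysis of Section~\ref{sec_FrontierTwoUser} (and the three-user case discussed above) as the base case. Throughout I take the achievable region to be the set $\{(C_1({\bf P}),\dots,C_n({\bf P})):{\bf 0}\le{\bf P}\le P_{\max}{\bf 1}\}$ together with all coordinate-wise reductions and time-sharing combinations of its points, and I write ${\cal F}$ for its Pareto boundary. I would reduce the statement to two facts: \emph{(i)} every Pareto-optimal rate tuple is attained by a power vector with at least one coordinate equal to $P_{\max}$, i.e. ${\cal F}\subseteq\bigcup_{i=1}^n{\cal F}_i$; and \emph{(ii)} the time-sharing (convex-hull) closure of $\bigcup_{i=1}^n{\cal F}_i$ is itself achievable, so its upper boundary is exactly ${\cal F}$.

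For \emph{(i)} I would first give the quick direct argument: if ${\bf P}$ is feasible with $m=\max_i P_i<P_{\max}$, then replacing ${\bf P}$ by $\lambda{\bf P}$ with $\lambda=P_{\max}/m>1$ keeps feasibility and, by (\ref{Ci}),
\begin{align}
C_i(\lambda{\bf P})=\log_2\!\left(1+\frac{g_{i,i}P_i}{\sigma_n^2/\lambda+\sum_{j\neq i}g_{i,j}P_j}\right)\ge C_i({\bf P}),\nonumber
\end{align}
with strict inequality whenever $P_i>0$, since a common scaling of all powers is equivalent to shrinking the effective noise $\sigma_n^2$. Hence no strictly-interior power vector can be Pareto optimal. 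For the inductive picture emphasised in the text, I would instead fix $P_n$, note that $g_{i,n}P_n$ is then a constant additive-noise contribution to every $C_i$ with $i<n$, apply the induction hypothesis to conclude that the $(C_1,\dots,C_{n-1})$-frontier at that $P_n$ is the convex hull of $(n-2)$-dimensional hyper-surfaces each holding one of transmitters $1,\dots,n-1$ at $P_{\max}$, and then use that $C_n({\bf P})$ is monotone increasing in $P_n$ by (\ref{Ci}) to see that sweeping $P_n$ up to $P_{\max}$ contributes exactly the additional case $P_n=P_{\max}$; symmetry over which transmitter plays the role of the ``new'' one then delivers $\bigcup_{i=1}^n{\cal F}_i$.

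For \emph{(ii)} I would observe that each extreme rate tuple of $\bigcup_{i=1}^n{\cal F}_i$ is produced by a single feasible power vector, and any convex combination of such tuples is realized by time sharing among the corresponding power vectors, so $\mbox{Convex Hull}\{\bigcup_{i=1}^n{\cal F}_i\}$ is achievable. Conversely, any Pareto-optimal point of the convexified region decomposes as a convex combination of Pareto-optimal points of the un-convexified region, each of which lies in some ${\cal F}_i$ by \emph{(i)}; hence the upper boundary of the convexified region coincides with that of $\mbox{Convex Hull}\{\bigcup_{i=1}^n{\cal F}_i\}$, which is the stated formula for ${\cal F}$.

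The hard part will be the nesting/monotonicity bookkeeping in the inductive route: one has to verify that as $P_n$ grows the entire $(n-1)$-user frontier surface shifts \emph{only} along the $C_n$ axis, its projection onto $(C_1,\dots,C_{n-1})$ being controlled by the induction hypothesis with the enlarged effective noise $\sigma_n^2+g_{i,n}P_n$, so that no lower-$P_n$ slice beats a higher-$P_n$ slice and no boundary point is lost when the slices are stacked into a surface. This is precisely where Lemma~\ref{lemma_mono} and its $n$-user analogue enter: they guarantee that along a potential line one rate coordinate increases while the constraints on the others remain satisfied, so the potential surfaces assemble into a genuine frontier. The direct scaling argument sidesteps this bookkeeping entirely, which is why I would present it first and keep the inductive construction as the structural explanation of why the frontier is exactly the union of the potential surfaces $\Phi(\cdot)$ with one coordinate pinned at $P_{\max}$.
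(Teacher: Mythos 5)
Your proposal is correct, and it actually contains two routes: the inductive one, which is essentially the paper's own argument (fix $P_n$, absorb $g_{i,n}P_n$ into the effective noise of the $(n-1)$-user subproblem, use monotonicity of $C_n$ in $P_n$ to push $P_n$ to $P_{\max}$, then symmetrize over which user is ``new''), and a direct scaling argument that the paper does not use. The scaling argument is the genuinely different and, in my view, stronger contribution: observing that replacing ${\bf P}$ by $\lambda{\bf P}$ with $\lambda=P_{\max}/\max_i P_i>1$ is equivalent to shrinking $\sigma_n^2$ to $\sigma_n^2/\lambda$ shows in one line that every strictly interior power vector is Pareto-dominated, hence ${\cal F}\subseteq\bigcup_i{\cal F}_i$, without any of the potential-line bookkeeping, without Lemma~\ref{lemma_mono}, and without the induction. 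What the paper's route buys in exchange is the structural picture — how the potential surfaces $\Phi(\cdot)$ stack along the $C_n$ axis and why their union tiles the frontier — which your scaling argument establishes only as a containment; your part \emph{(ii)} (achievability of time-sharing combinations plus decomposition of Pareto points of the convexified region) correctly closes the remaining gap between the containment and the stated equality. The only loose ends are trivial edge cases you should mention for completeness: the all-zero power vector (where $\lambda$ is undefined but the point is the dominated origin) and users with $g_{i,i}=0$ (where the strict-inequality claim fails but the coordinate is identically zero), both of which the paper itself discounts as degenerate.
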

\vspace*{-.2cm} Using the notation introduced in this section,
${\cal F}_i$ is effectively $\Phi(:,\ldots,P_{\max},\ldots,:)$ with
$P_{\max}$ at the $i^{th}$ power position.

\section{Characteristics of the Achievable Rate Region for Two-user Interference Channel}
\label{sec_2usersRegion} Treating the two-user interference channel in more
details,
this section studies the behavior of the rate region frontiers
${\cal F}_1$ and ${\cal F}_2$ in terms of convexity and concavity.
In addition, it discusses when a time-sharing approach should be employed,
with specific results pertaining to the symmetric channel.

\subsection{Convexity or Concavity of the Frontiers}
\begin{figure} [ttt]
\vspace*{-.25cm} \centering
\includegraphics[width=250pt,height=180pt]{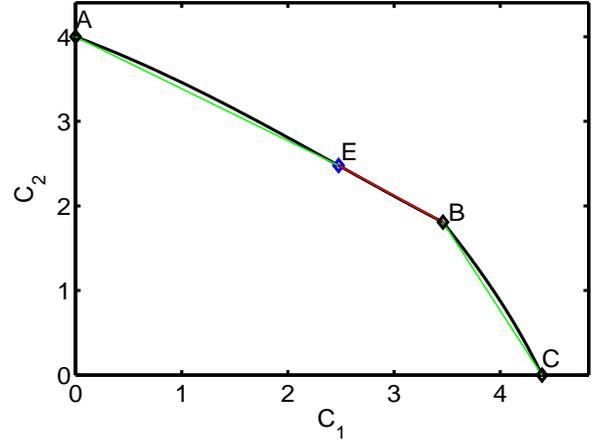}
\vspace*{-.65cm} \caption{Non-stationary inflection point E on
${\cal F}_2$ and concavity on ${\cal F}_1$. ($Q_1=0.45$ and
$Q_2=3.11$, for $P_{\max}=1$, $a=20$, $b=1$, $c=15$, $d=5$.) }
\vspace*{-.55cm} \label{fig_2user_concav_Inflic}
\end{figure}
The frontier ${\cal F}_2$ in (\ref{F1}) depends on $P_1$ through the
following relation of $c_1$ and $P_1$:
\begin{align}
P_1=\frac{1}{a}(1+bP_{\max})(2^{c_1}-1).
\nonumber
\end{align}
Therefore the second derivative of ${\cal F}_2$ with respect to
$c_1$ leads to the following expression:
\begin{align}
\frac{\partial^2 {\cal F}_2}{\partial c_1^2} =
(\theta+adP_1)^2-(a-\theta)(a-\theta+acP_{\max}), \nonumber
\end{align}
where $\theta=d+dbP_{\max}$. Therefore if the frontier line is
concave (i.e. $\frac{\partial^2 {\cal F}_2}{\partial c_1^2} \leq 0$)
then the enclosed region is convex, i.e. the straight line connecting any two
points in the rate region is entirely enclosed in the rate region.
Let $\Re(\cdot)$ be the real operation, and defining
the quantity $Q_1$ as:
\begin{align}
Q_1 = \frac{\Re(\sqrt{(a-\theta)(a-\theta+acP_{\max})})-\theta}{ad},
\end{align}
then it suffices to study the convexity or concavity by examining
the sign of $(P_1-Q_1)$, where $Q_1$ is derived such as:
\begin{align}
\mbox{sign}\left(\frac{\partial^2 {\cal F}_2}{\partial c_1^2}\right)
= \mbox{sign}({P_1-Q_1}).\nonumber
\end{align}
Thus the convexity or concavity of the frontier line ${\cal F}_2$ is
governed by:
\begin{itemize}
\item $Q_1 \leq 0$: the frontier line ${\cal F}_2$ is convex, and
the region bounded by ${\cal F}_2$ is concave. As $P_1-Q_1 \geq 0$
for all the range of $P_1$.
\item $Q_1 \geq P_{\max}$: the frontier line ${\cal F}_2$ is concave, and
 the region bounded by ${\cal F}_2$ is convex. As $P_1-Q_1 \leq 0$ for
 all the range of $P_1$.
\item $0 < Q_1 < P_{\max}$: the frontier line exhibits a non-stationary
inflection point when $P_1=Q_1$, and ${\cal F}_2$ is neither convex
nor concave between the point extremities of $\Phi(0,P_{\max})$ and
$\Phi(P_{\max},P_{\max})$. In this case:
    \begin{itemize}
     \item for $0 < P_1 \leq Q_1$: the
     line $\Phi(0:Q_1,P_{\max})$ is concave, i.e. the frontier segment between
     point $\Phi(0,P_{\max})$ (the point A in Fig. \ref{fig_2user_concav_Inflic})
     and point $\Phi(Q_1,P_{\max})$ (the point E in Fig.
     \ref{fig_2user_concav_Inflic}) is concave.
     \item for $Q_1 \leq P_1 < P_{\max}$: the line $\Phi(Q_1:P_{\max},P_{\max})$
     is convex, i.e. the frontier segment between point $\Phi(Q_1,P_{\max})$
     (the point E in Fig. \ref{fig_2user_concav_Inflic})
     and point $\Phi(P_{\max},P_{\max})$ (the point B in Fig.
     \ref{fig_2user_concav_Inflic}) is convex.
\end{itemize}
\end{itemize}
By symmetry, the frontier line ${\cal F}_1$ exhibits the following
behavior: it is convex when $Q_2 \leq 0$, and it is concave when
$Q_2 \geq P_{\max}$, and it exhibits an non-stationary inflection
point when $P_2=Q_2$ -- specifically it is convex for $0 < P_2 \leq
Q_2$ and concave for $Q_2 \leq P_2 < P_{\max}$. Hereby $Q_2$ is
defined as:
\begin{align}
Q_2 = \frac{\Re(\sqrt{(c-\beta)(c-\beta+acP_{\max})})-\beta}{cb},
\end{align}
with $\beta=(b+bdP_{\max})$.

When describing the full rate region frontier through ${\cal F}_1
\cup {\cal F}_2$, the rate region is convex if both ${\cal F}_1$ and
${\cal F}_2$ are concave, and the rate region is concave otherwise.
Therefore, whenever the frontier (or segment thereof) is concave, it
will describe the convex hull of the rate region instead of a time
sharing solution. Fig. \ref{fig_2user_concav_Inflic} illustrates an
example where the frontier
 ${\cal F}_1$ is concave, and the frontier ${\cal F}_2$ exhibiting a
 non-stationary inflection point E.
In this case the convex hull rate region is found by operating along
the concave frontier ${\cal F}_1$, and time-sharing between point B
and point E, and operating along the concave segment of ${\cal F}_2$
between point E and point A.

\subsection{Optimality of Time Sharing}
This subsection investigates the optimality of time sharing between
operating points in the rate region. For instance, whenever the rate
region frontier segment is convex (equivalently, the enclosed rate
region is concave) then operating with time sharing between the
extremities of the curve is optimal than operating along the
log-defined frontier. Analyzing the ${\cal F}_2$ frontier, and
referring to the Fig. \ref{fig_twoUserRegion}, it follows:
\begin{itemize}
\item $Q_1 \leq 0$: (i.e. ${\cal F}_2$ frontier is convex)
it is optimal to apply time-sharing through the following options:
    \begin{itemize}
     \item between point A and point B,
    \item between point A and point $\Phi(P_{\max},Q_2)$ if ${\cal F}_1$
          exhibits a non-stationary inflection point,
    \item between point A and a point on the concave segment of
          ${\cal F}_1$,
    \item between point A and point C.
\end{itemize}
These depend on how the parameters $a,b,c,d,$ and $P_{\max}$ would
lead to a convex hull region. This can be done by evaluating and
comparing each of the candidate solution aforementioned.

\item $Q_1 \geq P_{\max}$: (i.e. ${\cal F}_2$ frontier is concave)
the potential line $\Phi(:,P_{\max})$ is
optimal, and no time sharing is to be employed.
\item $0 < Q_1 < P_{\max}$: it is optimal to use the concave potential
 line segment $\Phi(0:Q_1,P_{\max})$, and subsequently to use the time
 sharing candidate options that the case $Q_1 \leq 0$ mentioned but with
  replacing the point A with the point $\Phi(Q_1,P_{\max})$.
\end{itemize}
The conditions for ${\cal F}_1$ follow by symmetry.

\subsubsection{Time sharing between Points A, B, and C in Fig.
\ref{fig_twoUserRegion}} Discounting the case when ${\cal F}_1$ or
${\cal F}_2$ exhibit non-stationary inflection point for simplicity,
and focusing on the case of $Q_1 \leq 0$ and $Q_2 \leq 0$, it is
important to know when time sharing between point A and point C is
better than time sharing through the intermediate point B. This is
done by comparing the straight line connecting points A and C, and
the coordinates of B. It follows that operating with time sharing
between the points (or system states) A and C (i.e. one transmitter
only transmitting at a certain point) is optimal when:
\begin{align}
\frac{(1+cP_{\max})(1+dP_{\max})}{1+cP_{\max}+dP_{\max}} \geq
\left(\frac{1+aP_{\max}+bP_{\max}}{1+bP_{\max}}\right)^\gamma
\label{condAC}
\end{align}
with $\gamma=\log_2(1+cP_{\max})/\log_2(1+aP_{\max})$.

\subsubsection{Symmetric two-user interference channel}
\begin{figure} [ttt]
\centering
\includegraphics[width=250pt,height=180pt]{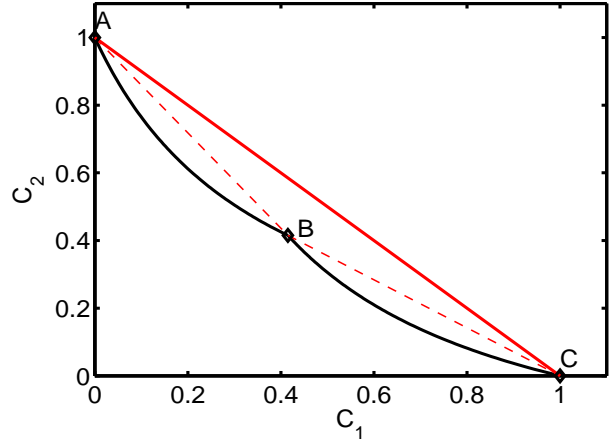}
\caption{Optimal time sharing between point A and point C. (For $P_{\max}=1$,
$a=1$, $b=2$, and $b^*$ calculated from (\ref{bsym}) equals $\sqrt{2}$.)}
 \label{fig_twoUserSym}
 \vspace*{-.5cm}
\end{figure}
For the symmetric two-user interference channel, $a=c$, and $b=d$.
In this case, (\ref{condAC}) simplifies and leads to the following
theorem:
\begin{theorem} Time-sharing operation with one transmitter active
 at full power at a time is
optimal when
\begin{align}
b \geq \frac{\displaystyle \sqrt{1+ aP_{\max}}}{P_{\max}}.
\label{bsym}
\end{align} \label{theorem_b}
\end{theorem}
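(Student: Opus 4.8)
The plan is to derive (\ref{bsym}) directly from the general time-sharing condition (\ref{condAC}) by specializing to the symmetric parameters $a=c$, $b=d$, and then to confirm that this specialization stays inside the regime in which (\ref{condAC}) is valid. First I would observe that in the symmetric case the exponent $\gamma=\log_2(1+cP_{\max})/\log_2(1+aP_{\max})$ is exactly $1$, so (\ref{condAC}) loses its transcendental character and becomes
\begin{align}
\frac{(1+aP_{\max})(1+bP_{\max})}{1+aP_{\max}+bP_{\max}} \ \geq\ \frac{1+aP_{\max}+bP_{\max}}{1+bP_{\max}}. \nonumber
\end{align}
All quantities here are strictly positive, so I would clear the two denominators to get the equivalent polynomial inequality $(1+aP_{\max})(1+bP_{\max})^2 \geq (1+aP_{\max}+bP_{\max})^2$.

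The next step is pure algebra: writing $x=aP_{\max}$ and $y=bP_{\max}$, expanding both sides and cancelling common terms, the difference of the two sides collapses to $x(y^2-x-1)$. Since $x>0$, the inequality is equivalent to $y^2\geq x+1$, i.e.\ $b^2P_{\max}^2\geq 1+aP_{\max}$; taking square roots (both sides positive) and dividing by $P_{\max}$ yields precisely (\ref{bsym}). This already shows that, in the symmetric channel, the chord through the single-transmitter extreme points A and C dominates the intermediate corner B exactly when (\ref{bsym}) holds.

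The only subtlety — and the part I expect to require the most care — is to confirm that when (\ref{bsym}) holds we are genuinely in the regime $Q_1\leq 0$, $Q_2\leq 0$ under which (\ref{condAC}) was derived, so that ${\cal F}_1$ and ${\cal F}_2$ are convex frontiers with no non-stationary inflection point and the optimal operation really is time sharing between extreme points. For this I would use (\ref{bsym}) to note $b^2P_{\max}\geq(1+aP_{\max})/P_{\max}>a$, hence $\theta=b+b^2P_{\max}>a$ and $a-\theta<0$; then the product $(a-\theta)(a-\theta+acP_{\max})$ is either negative, in which case $\Re(\sqrt{\cdot})=0\leq\theta$, or nonnegative, in which case squaring reduces $\Re(\sqrt{\cdot})\leq\theta$ to $a(1+aP_{\max})\leq\theta(2+aP_{\max})$, which is immediate from $\theta>a$. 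Either way $Q_1\leq 0$, and by the $a\leftrightarrow c$, $b\leftrightarrow d$ symmetry also $Q_2\leq 0$. Combining this consistency check with the algebraic reduction above and the optimality discussion preceding (\ref{condAC}) establishes the theorem.
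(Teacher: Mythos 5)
Your proposal is correct and follows essentially the same route as the paper: the theorem is obtained by specializing (\ref{condAC}) to $a=c$, $b=d$ (where $\gamma=1$) and simplifying, and the paper separately verifies, just as you do in your final paragraph, that (\ref{bsym}) forces $Q_1,Q_2\leq 0$ so that both frontiers are convex. Your contribution is merely to make explicit the algebraic reduction $(1+x)(1+y)^2-(1+x+y)^2=x(y^2-x-1)$ that the paper asserts without showing.
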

\emph{Remark:} For high SNR (i.e. $aP_{\max} \gg 1$), (\ref{bsym})
reduces to $bP_{\max} \geq \sqrt{aP_{\max}}$, which interestingly
coincides with recent results in
\cite[eq.(3)]{EtkinTse:GaussianInterfChannelToOneBit}.

This indicates that when the cross interfering power gain $b$
exceeds the right hand side value in (\ref{bsym}), denoted by
$b^{\ast}$, it is optimal to operate with one transmitter at a time.
Fig. \ref{fig_twoUserSym} uses $b=2$, which is larger than
 $b^*=\sqrt{2}$ obtained from (\ref{bsym}) for
$a=1$ and $P_{\max}=1$. By contrast in Fig. \ref{fig_twoUserRegion}, the
value of $b=1$ is adopted and the behavior exhibited is different as
expected.

In addition, we subsequently prove that the expression in
(\ref{bsym}) is a sufficient condition for both frontiers ${\cal
F}_1$ and ${\cal F}_2$ to be convex, i.e. $Q_1$ and $Q_2$ are always
$\leq 0$.

\begin{proof} For the symmetric case, $Q_1=Q_2=Q_{sym}$ can be written as
\begin{align}
Q_{sym}
=\frac{\Re(\sqrt{(a-\theta)(a-\theta+a^2P_{\max})})-\theta}{ab},
\nonumber
\end{align}
where $\theta=b+b^2P_{\max}$. $Q_{sym}$ can also be written in this
form:
\begin{align}
Q_{sym} =\frac{\Re(\sqrt{T_1T_2})-\theta}{ab}, \nonumber
\end{align}
where $T_1=a-\theta=a-b-b^2P_{\max}$, and
$T_2=a-\theta+a^2P_{\max}$. From the expression in (\ref{bsym}), $a$
can be alternatively upper-bounded as $a\leq
(b^2P_{\max}-1/P_{\max})$. Therefore, $T_1$ is upper-bounded as:
\begin{align}
T_1 \leq -1/P_{\max} - b \label{T1}.
\end{align}
From (\ref{T1}), $T_1$ is always negative. $T_2$ however can be
positive or negative. Each case is evaluated as follows:
\begin{itemize}
\item $T_2\geq 0$: then $\Re(\sqrt{T_1T_2})=0$, and
as $\theta$ is always positive, then $Q_{sym}\leq 0$.
\item $T_2\leq 0$: $\Re(\sqrt{T_1T_2})$ is $ \geq 0$. In this case,
the numerator of $Q_{sym}$ can be written as:
\begin{align}
\mbox{num}(Q_{sym}) = \sqrt{(\theta-a)(\theta-a-a^2P_{\max})}-\theta.
\nonumber
\end{align}
Given the fact that $(\theta-a-a^2P_{\max}) \leq (\theta-a)$, then
$\mbox{num}(Q_{sym})$ can be upper-bounded as:
\begin{align}
\begin{array}{ll}
\mbox{num}(Q_{sym}) & \leq \sqrt{(\theta-a)^2}-\theta \\
& \leq -a \leq 0.
\end{array}\nonumber
\end{align}
\end{itemize}
Hence the frontiers ${\cal F}_1$ and ${\cal F}_2$ are convex.
\end{proof}
%
Therefore, when $b$ satisfies the equation in (\ref{bsym}) the
frontiers lines are always convex, and time-sharing with only one
transmitter active at a time is optimal.

\section{Conclusions}
The achievable rate region frontiers for the $n-$user interference channel
were presented when there is no cooperation at the transmit side nor at the
receive side. The receivers do not employ multiuser detection, and
the interference is considered as additive noise. Results were first
found for the two-user interference channel. The $3-$user interference channel
was treated next to show the effect of adding the additional dimension, and
subsequently the $n-$user interference channel generalization results
followed. The $n-$user rate region is found to be the convex hull of the
union of $n$ hyper-surfaces each of dimension $n-1$. Each hyper-surface
frontier ${\cal F}_i$ is defined by having the $i^{th}$ transmitter
transmitting at its full power $P_{\max}$.

The two-user interference channel was further studied regarding the
convexity or the concavity of the frontiers. Conditions when the
frontier is convex or concave or exhibiting a non-stationary
inflection point were also obtained. Whenever the log-defined
frontier line is convex then a time sharing solution is optimal. For
the symmetric two-user case, the condition was found to indicate
when time-sharing between the points $\Phi(P_{\max},0)$ and
$\Phi(0,P_{\max})$ (i.e. one transmitter solely transmitting at full
power at a certain time) is to be used, rather than time-sharing
through the point $\Phi(P_{\max},P_{\max})$ (i.e. both transmitters
transmitting at full power). That condition was also proven to be
sufficient to ensure that both frontiers (${\cal F}_1$ and ${\cal
F}_2$) will in fact always be convex.

\appendix
Proof that the equation (\ref{C2P2}), $C_2(P_2)$ is monotonically
increasing in $P_2$.\\
\begin{proof}
Effectively (\ref{C2P2}) is in the form of $f(1+g(x))$. As $f(\cdot)$ is monotonically
increasing in its argument, it suffices to prove that $g(x)$ is
monotonically increasing in $x$. Therefore define $g(P_2)$ as:
\begin{equation}
g(P_2) = \frac{acP_2}{a+d(1+bP_2)(2^R-1)}, \nonumber
\end{equation}
\begin{eqnarray}
\lefteqn{\frac{\partial g(P_2)}{\partial P_2} =}\nonumber\\
& & \frac{ac}{a+d(1+bP_2)(2^R-1)} -
\frac{acP_2db(2^R-1)}{(a+d(1+bP_2)(2^R-1))^2} \nonumber\\
& & = \frac{a^2c+acd(1+bP_2)(2^R-1)-acP_2db(2^R-1)}{(a+d(1+bP_2)(2^R-1))^2} \nonumber\\
& & = \frac{a^2c+acd(2^R-1)+acdbP_2(2^R-1)-acdbP_2(2^R-1)}{(a+d(1+bP_2)(2^R-1))^2}\nonumber\\
& & = \frac{ac(a+d(2^R-1))}{(a+d(1+bP_2)(2^R-1))^2}. \label{del_g}
\end{eqnarray}
The numerator in (\ref{del_g}) is $\neq 0$ if $a \neq 0$ and $c \neq 0$ ($a=0$ or $c=0$
are the trivial cases where the rate region is either a line or the point zero).
As $R\geq 0$, then $(2^R-1) \geq0$. Thus $\partial g(P_2)/{\partial P_2}$ is
always $> 0$ for non-trivial cases of $a$ and $c$. Thus $g(P_2)$ is monotonically increasing in $P_2$,
and equivalently $C_2(P_2)$ is monotonically increasing in $P_2$.
\end{proof}


\bibliography{bib_entryM}

\end{document}